\def\amsbb{\use@mathgroup \M@U \symAMSb}
\theoremstyle{plain}
\newtheorem{thm}{Theorem}
\newtheorem{lem}[thm]{Lemma}
\theoremstyle{remark}
\newtheorem{remark}[thm]{Remark}
\newcommand{\E}{\amsbb{E}}
\newcommand{\R}{\amsbb{R}}
\newcommand{\Var}{\mathrm{Var}}
\newcommand{\market}{\mathrm{market}}
\newcommand{\NoFill}{\mathrm{no\;fill}}
\newcommand{\Fill}{\mathrm{fill}}
\newcommand{\MarketFill}{\mathrm{market\;fill}}
\newcommand{\LimitFill}{\mathrm{limit\;fill}}
\newcommand{\takerfee}{c_{\mathrm{taker}}}
\newcommand{\makerfee}{c_{\mathrm{maker}}}
\title{The effect of latency on optimal order execution policy}
\author{Chutian Ma$^{*}$}
\address[Chutian Ma]{Department of Mathematics, Johns Hopkins University and Causify.AI Inc.}
\email{cma27@jhu.edu}
\author{Giacinto Paolo Saggese$^{*}$}
\address[Giacinto Paolo Saggese]{Department of Computer Science, University of Maryland, College Park and Causify.AI Inc.}
\email{gsaggese@umd.edu}
\author{Paul Smith$^{*}$}
\address[Paul Smith]{Department of Mathematics, University of North Carolina-Chapel Hill and Causify.AI Inc.}
\email{pasmit@unc.edu}
\thanks{$^{*}$ Authors listed alphabetically.}
\begin{document}

\begin{abstract}
Market participants regularly send bid and ask quotes to exchange-operated
limit order books. This creates an optimization challenge where their
potential profit is determined by their quoted price and how often their
orders are successfully executed. The expected profit from successful
execution at a favorable limit price needs to be balanced against two key
risks: (1) the possibility that orders will remain unfilled, which hinders
the trading agenda and leads to greater price uncertainty, and (2) the
danger that limit orders will be executed as market orders, particularly in
the presence of order submission latency, which in turn results in higher
transaction costs. In this paper, we consider a stochastic optimal control
problem where a risk-averse trader attempts to maximize profit while
balancing risk. The market is modeled using Brownian motion to represent the
price uncertainty. We analyze the relationship between fill probability,
limit price, and order submission latency. We derive closed-form
approximations of these quantities that perform well in the practical regime
of interest. Then, we utilize a mean-variance method where our total reward
function features a risk-tolerance parameter to quantify the combined risk
and profit. We show that the optimal policy is characterized by the
Hamilton–Jacobi–Bellman equation and can be computed efficiently from
backward dynamic programming. 
\end{abstract}

\maketitle

\tableofcontents

\section{Introduction}

When trading financial assets on exchanges using limit order books, traders
typically use a combination of limit orders and market orders to execute their
trades. A market order is an instruction to buy or sell an asset immediately at
the best available price, ensuring quick execution. A limit order, on the other
hand, allows traders to set a specific price at which they are willing to buy or
sell an asset. The limit order is not executed unless a buyer/seller is willing
to pay the price specified in the limit order. 

Limit orders would be preferable to a trader, if only execution were guaranteed,
because they provide critical price control and fee advantages in high-volume
trading environments. By specifying exact execution prices, these market
participants can precisely manage their entry and exit points. In the case of
some participants, such as market makers, this precise control is necessary to
capture the bid-ask spread, which may be the trader's primary source of profit.
Limit orders also typically qualify for maker rebates rather than incurring
taker fees, substantially improving economics when executing thousands of trades
daily. This is supported by the empirical study in \cite{harris1996market}.

If a limit order is filled, its expected return is its price improvement
compared to a benchmark. Since our work is focused on order execution quality,
we do not explicitly consider the role of a filled order in the context of a
portfolio management strategy, as this role can vary from strategy to strategy
and depend upon numerous factors. If the limit order is not filled, then there
is no return to consider. Thus, the investor must assess the probability that
the limit order gets executed within a certain time window, and balance it with
the expected return. For example, setting a high ask price for a sell order
yields better return given a fill, but a low probability of execution.  

From a mathematical point of view, the order execution problem can be viewed as
the choice of optimal bid/ask quotes and execution window. The constraints on
the optimization arise from many factors, including risk tolerance of the trader
and market impact. Optimal execution strategies which address different concerns
have been studied. For example, \cite{almgren2001optimal} studies the optimal
execution strategy taking into account market impact and price volatility. The
work in \cite{obizhaeva2013optimal} advances this direction by further using a
liquidity recovery model to measure the market impact while studying execution
strategy. See also \cite{gatheral2011optimal}, \cite{gueant2013dealing},
\cite{predoiu2011optimal} for references. Our work in this paper mainly
addresses the impact of the transaction costs caused by latency and the
uncertainty of the price on determining the optimal execution strategy.

The stochastic nature of the problem comes from the uncertainty of the price. We
will use Brownian motion to model the price movement in a short time window. As
is well known, the usage of Brownian motion in modeling the price of financial
assets can be traced back to Louis Bachelier's work \cite{bachelier1900theorie}
in 1900. This approach gained mainstream acceptance following Black and Scholes'
and Merton's work on the option pricing model, see \cite{black1973pricing},
\cite{merton1971theory}. For research in this direction, see also
\cite{garleanu2016dynamic}, \cite{heston1993closed},
\cite{merton1973intertemporal}, \cite{hu1998optimal}. Price has also been
modeled using other stochastic processes, for example the compound Poisson
process. We refer interested readers to \cite{cohen1981transaction},
\cite{cho2000probability} for references. In this paper, we assume the price
follows a Brownian motion during a short time horizon. Various important
features of trading, including probabilities of orders being filled,
probabilities of a market fill given a fill, and expected reverse price movement
given no fill, are calculated using this model. 

As we mentioned earlier, one of the factors we consider that needs to be
balanced with profit is the effect of latency (or lag). Latency refers to the
delay between the time when the investor sends an order and the actual time when
the order reaches the order book. In modern markets, latency can range from
microseconds to hundreds of milliseconds. Longer latency adds to the transaction
costs in a variety of ways:
\begin{enumerate}
    \item Taker/maker fee.
    Investors with higher latency take a greater risk, as the price is more
    likely to drift away from the price at the time of decision. This may result
    in non-execution of the order or may result in their limit orders becoming
    marketable by the time it arrives in the order book. A market order incurs a
    taker fee while an executed limit order may compensate the trader with a
    maker fee. For large-volume traders and market makers, the fees can
    accumulate quickly if their limit orders are not placed properly.
    \item Queue position.
    In most modern markets, orders with the same price are executed according to
    their arrival time. The traders with lower latency gain an advantage, as
    their orders have higher priority, and in turn suffer less from potential
    negative price movement.
\end{enumerate}
We will address the first effect, i.e., risk of non-execution and potential
maker/taker fees, using our model. We examine the probability of ``marketable
limit orders" as a function of the latency, which is then incorporated into the
reward function as a potential loss (negative profit). We will assume the
latency is a constant in this paper. One may also model the latency itself as a
random variable. We refer the reader to Cartea and Sánchez-Betancourt's work in
\cite{cartea2023optimal} where they incorporate random delays and model the
limit order book dynamics with both fundamental price movements and short-lived
flickering quotes. They demonstrate that optimal strategies differ significantly
between patient and impatient traders, with the former using limit orders
primarily for speculative price improvements. Their empirical analysis of FX
market data shows that latency-optimal strategies outperform conventional
approaches by a scale comparable to the transaction costs, further justifying
the necessity of considering latency in the order execution framework. For more
work in this direction, we refer the readers to \cite{Le21},
\cite{cartea2021latency}, \cite{cartea2021shadow} and \cite{gao2020optimal}. 

With our model for price in hand, the stage is set for the stochastic optimal
control problem, which aims to optimize the reward balanced with risk. We assume
that the trader is risk-averse and is seeking to execute orders within a given
time window. There are two major choices to be made. One choice is the limit
price. A high price increases the risk of non-execution and a low price
increases the risk of execution as a market order. The other choice is the
execution window. In addition to taking into account hard constraints like a
trading agenda, traders consider how price uncertainty increases as a function
of the trading horizon, and so typically prefer prompt order execution. We
introduce elements of mean-variance analysis in our formulation of the
stochastic control problem so as to balance profits and the risk discussed
above. We define our reward function using the exponential utility function,
which has been developed for risk-sensitive controls, see
\cite{nagai1996bellman}, \cite{marcus1997risk}. The exponential utility function
introduces an extra parameter $\lambda$, which measures the risk tolerance of
the trader. A larger $\lambda$ indicates a greater level of risk aversion. We
then study the risk-adjusted optimal policy obtained from the Bellman equation.
We would like to mention the work in \cite{Le21}, where the optimization of the
order execution reward with latency effects is considered. The reward functions
considered in \cite{Le21} specifically feature the market impact caused by
walking the order book. Our work emphasizes (1) the potential price slippage and
negative price movement caused by latency if an order is not executed and (2)
optimal execution for small-volume orders consuming not more than the liquidity
available at the top of the order book.

The outline of the paper is as follows. In Section \ref{sec: bm}, we review the
basics of the Brownian motion price model and describe our order execution
model. We provide estimates for the probability of fills as both limit and
market orders, as well as estimates of the non-fill probability. In Section
\ref{sec: order split}, we aggregate the probabilities we calculated in the
previous section and formulate the order execution problem as a stochastic
optimal control problem. We study the optimal policy, which is affected by a
number of factors including instrument volatility, bid-ask spread, latency, and
maker/taker fees. In Section \ref{sec: mc sim}, we present results obtained by
Monte Carlo simulations. 

We use arrival price as our price benchmark.
Notably, we exclude certain real-world effects, such as queue position and the
possibility of partial fills (though our analysis may be extended to handle the
latter).

\section{The Brownian motion price model}
\label{sec: bm}
The basic model that we adopt for modeling price is that of a standard Brownian
motion (BM) on the unit interval. Varying instrument volatilities and limit
order durations may be rescaled so that this model applies.

We choose to adopt the standard BM as our price model rather than geometric
Brownian motion primarily because our main interest is in short time scales
(e.g., see the discussion in \cite{BoBoDoGo18}[\S 2.1.1]). We note that a
geometric Brownian motion model was similarly considered in \cite{LoMaZh02}. Our
conclusions regarding the applicability of the standard BM model to our domain
differ from those of \cite{LoMaZh02}, but this is perhaps due to the orders of
magnitude difference in the time scales considered.

We note that, for our initial analysis, we do not consider a BM with drift. We
again justify this based upon our primary interest in short time scales. One may
extend our analysis to BM with drift at the cost of increased analytical
complexity.

\subsection{Formalism}

Without loss of generality, we restrict ourselves to ``sell" orders so as to
simplify the exposition. Let $B_t$ be a standard Brownian motion on the unit
interval $t \in [0, 1]$. We take $B_t$ to represent the top-of-book ``bid" price
(translated to zero at time zero). Note that these are ``opposite-side
quote-relative coordinates", using the terminology of \cite{BoBoDoGo18}[\S
3.1.6]. Each limit order decision involves choosing a value $y \in \R$ (where
the case $y \leq 0$ corresponds to a marketable limit order). Note that $y$ is
expressed in units of volatility, as $B_t$ is a standard BM.

The basic mechanics of limit order placement and execution in our model are as
follows:
\begin{itemize}
  \item At time $t_0 = 0$, we observe the order book (bid taken to be zero) and
        choose a limit price $y \in \R$ at which to place a sell limit order.
  \item At some time $t_1 > t_0$, the limit order hits the limit order book.
  \item If, at time $t_1$, the current bid price, as represented by $B_{t_1}$,
      is greater than or equal to $y$, then we consider the order to be
      marketable and fully executed at time $t_1$.
  \item If $B_{t_1} < y$, then the order rests on the book as a limit order. If,
      for some $t_1 < t \leq 1$, we have $B_t = y$, then we consider the order
      fully executed at time $t$.
  \item If we arrive at time $t = 1$ with the order unexecuted, then the
        order expires and so is unfilled.
\end{itemize}

\subsection{Single limit order execution: hitting times}

Given our formal price model, it is natural to study the distribution of
hitting times of a BM as a function of $y > 0$.
Let
\[
    T_y = \inf \{t \geq 0: B_t = y\}.
\]
Then $T_y$ is Levy distributed with scale parameter $y^2$, which, for $y > 0$,
has probability density function (pdf)
\[
    g_y(t) = \frac{y}{\sqrt{2 \pi t^3}} \exp \left( - \frac{y^2}{2 t} \right) \mathbb{1}_{\{t > 0\}}
\]
and cumulative density function (cdf) given by
\[
    P(T_y \leq t) = 2 \Phi \left( - \frac{y}{\sqrt{t}} \right),
\]
where $\Phi$ is the standard normal cdf, defined via
\[
  \Phi(x) = \int_{-\infty}^x f(t) dt
\]
with
\[
    f(t) = \frac{1}{\sqrt{2 \pi}} e^{-\frac{1}{2} t^2}.
\]




\subsection{Risk of execution as a market order}

Suppose there is a predictable consistent latency associated with order
submission. That is, the time it takes to observe the price at $t = 0$, make an
order submission decision, submit the order, and for the order to arrive in the
limit order book, is a given quantity, say $0 < \ell < 1$. The order will only
execute if the price reaches the limit price threshold between time $t = \ell$
and $t = 1$, which is within the execution window and after the order reaches
the limit order book. The order will be marketable if the price at $t = \ell$
already exceeds the threshold. In our framework, we take this to be the
probability that a limit order will be marketable by the time it hits the limit
order book. More formally, let $y \geq 0$ denote the limit price of our order.
Denote by $P_y(\text{market fill})$ the probability that the limit is marketable
by the time it reaches the order book. We have
\begin{align}\label{eq prob market fill}
    P_y(\text{market fill}) = P(B_\ell \geq y) = \Phi(-\frac{y}{\sqrt{\ell}}).
\end{align}
The lower (smaller) the latency, the less likely a market fill will occur. In
fact, the limit price needs to be at least comparable to $\sqrt{\ell}$ in order
for the market fill probability to be low.

The following plot displays these market fill probabilities for a range of
delays (expressed as fractions of the execution window) over a range of limit
order price levels (normalized in units of volatility) from 0 to 1.5.

\begin{figure}[H]
   \centering
   \includegraphics[width=\linewidth]{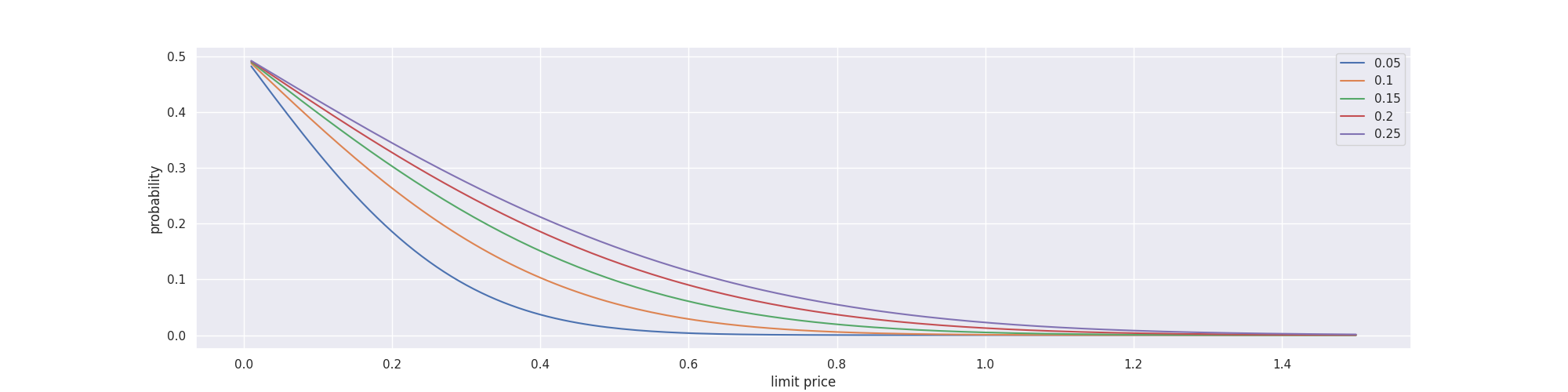}
   \caption{Probability of a market fill}
\end{figure}

The probability of a limit fill can be computed by conditioning on the value of
$B_\ell$. Suppose $B_\ell = z$ where $z < y$. Then the order will be executed as
a limit order if the price moves up by at least $y - z$ in the remaining $1 -
\ell$ time window:
\begin{equation}\label{eq prob limit fill int}
    \begin{aligned}
        P_y(\LimitFill) & = \int_{-\infty}^{y} P(\sup_{\ell < t < 1} B_t \geq y | B_\ell = z) P(B_\ell \in dz) \\
        & = \int_{-\infty}^{y} P(T_{y-z} \leq 1-\ell) P(B_\ell \in dz) \\
        & = \int_{-\infty}^{y} 2\Phi(-\frac{y-z}{\sqrt{1-\ell}}) \frac{1}{\sqrt{\ell}} f(\frac{z}{\sqrt{\ell}}) dz.
    \end{aligned}
\end{equation}
In practice, one may prefer a more efficient way to evaluate the probability than
through integration. In this paper, we will approximate $P_y(\text{limit fill})$ using 
\begin{equation}\label{eq prob limit fill approx}
    P_y(\LimitFill) \approx 2 \Phi(-y) ( 1 - \Phi(-
    \frac{y}{\sqrt{\ell}})) + \frac{2}{\sqrt{2\pi}} \sqrt{\ell}
    f(-\frac{y}{\sqrt{\ell}}).
\end{equation}
\begin{remark}
    The derivation of the approximation may be found in the Appendix.
\end{remark}
\begin{remark}
    Heuristically, one may view the events $\{T_y < 1\}$ and $\{B_\ell < y\}$ as
    ``almost" independent for a small delay value $\ell$. The first term in
    \eqref{eq prob limit fill approx} is the product of these two
    ``almost-independent" probabilities. The second term is a lower-order
    correction term.
\end{remark}
The approximation works well for delay $\ell$ in the range $\ell < 0.2$. The
following figure shows the shape of the probability of a limit fill for a range
of delays.
\begin{figure}[H]
    \centering
    \includegraphics[width=\linewidth]{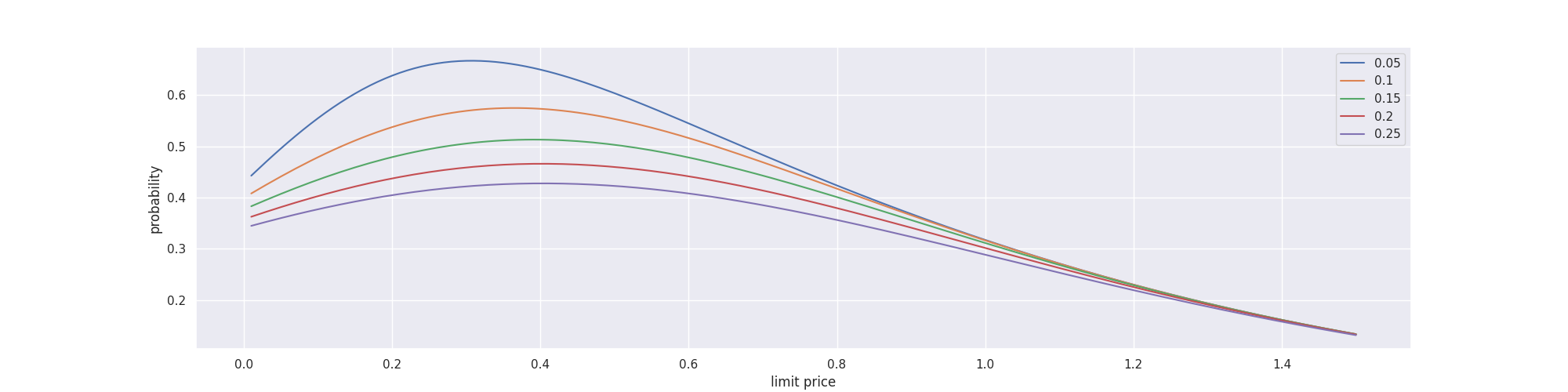}
    \caption{Probability of a limit fill}
\end{figure}
Lastly, the probability of no fill is equal to
\begin{equation}
    P_y(\NoFill) = 1 - P_y(\MarketFill) - P_y(\LimitFill).
\end{equation}

Note that, in the small limit price range, increasing the limit price makes the
order more likely to be executed as a limit order due to their being a smaller
chance of a market fill. Hence, there is a local maximum, and beyond it, the
risk of a non-fill becomes the dominant factor, which causes the limit fill
probability to drop. 

In some analyses, more relevant than the marketable probability is the
conditional probabilty that a filled order is filled as a market order rather
than a limit order. The consideration of this risk is important particularly for
trading that may exhibit a market-making component where one is seeking to earn
rebates by providing liquidity rather than pay fees by taking liquidity. The
following plot shows the expected proportion of market fills among all fills.

\begin{figure}[ht!]
    \centering
    \includegraphics[width=\linewidth]{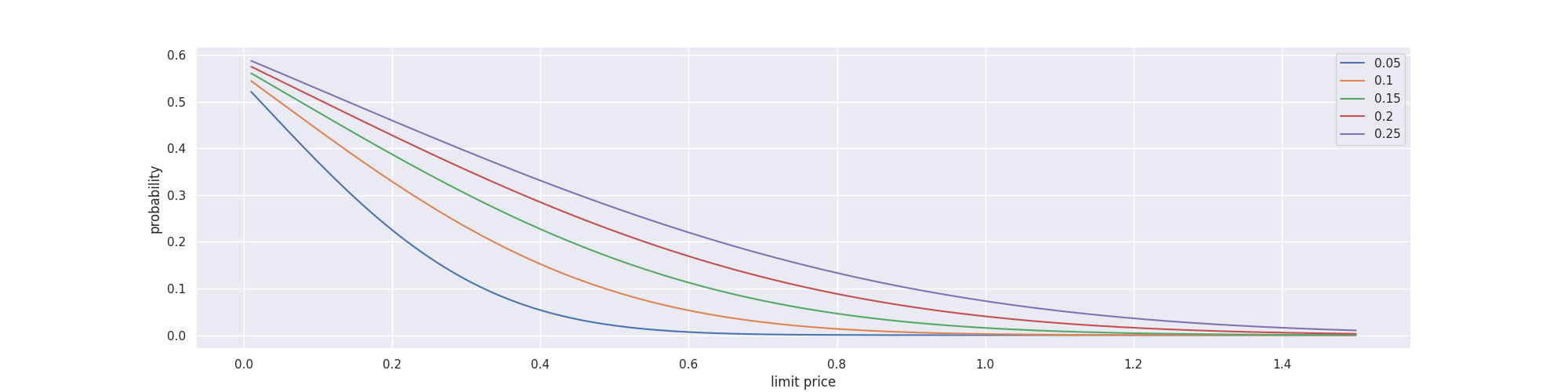}
    \caption{Expected proportion of market fills among all fills}
\end{figure}

\begin{remark}[Effects of scaling]
Suppose now that we change the period of execution, but preserve the standard
scaling of unit volatility over a time interval of length one. If the new time
interval is of length $T$, then we have that the volatility over the length $T$
window is $\sqrt{T}$. To preserve the execution probability, we must rescale the
limit price to some $y^\prime$ and consider the execution at $t^\prime = 1$,
where $t^\prime = \sqrt{T} t$. This implies that we take
\[
  y^\prime = \sqrt{T} y.
\]
For windows of time with $T < 1$, this means placing the limit closer to the
market, and for windows $T > 1$, placing the limit farther away.
\end{remark}



\begin{remark}[Effect of spread]
Orders that are sufficiently aggressive may end up inside the spread. If the
order is closer to the market than the bid-ask midpoint, then execution would be
considered to have entailed spread costs.

Since not paying spread costs is an execution goal, the spread (in addition to
latency) imposes a limit on how aggressively orders may be placed. Note also
that spread costs must be measured with respect to a benchmark. Two obvious
candidates are (1) the spread associated with the arrival price and (2) the
spread immediately before or after execution. Since the arrival price spread is
usable in decision-making processes, it is the benchmark that we adopt in this
work.
\end{remark}

\subsection{Close price given non-execution}
When a limit order is not executed within a window, we may infer that the price
is likely to have moved away from the limit price. This heuristic is made
precise by estimating the expected close given a non-fill.

\subsubsection{Expected close given a non-fill}
To compute the expected close given a non-fill, we circumvent pdf calculations
by using the martingale property.

Let $y$ be the limit order price and let $p = p(y)$ denote the probability of a
fill with limit $y$ during the trading window. Then, by the martingale property,

\begin{align*}
    & \E[B_1 | \LimitFill] = y.
\end{align*}
    
We must have
\[
  \E[B_1] = \E[B_1| \MarketFill] P(\MarketFill) + 
  \E[B_1 | \LimitFill] P(\LimitFill) + \E[B_1 | \NoFill] P(\NoFill),
\]
which yields
\begin{equation}\label{eq expected close given no fill}
    \E(B_1|\NoFill) \approx
    - \frac{-\sqrt{\ell} f(\frac{y}{\sqrt{\ell}}) - y \cdot 2\Phi(-y) (1 -
    \Phi(-\frac{y}{\sqrt{\ell}})) - \frac{2}{\sqrt{2\pi}}\sqrt{\ell}
    f(\frac{y}{\sqrt{\ell}})}{1 - \Phi(-\frac{y}{\sqrt{\ell}}) - 2 \Phi(-y) ( 1
    - \Phi(- \frac{y}{\sqrt{\ell}})) - \frac{2}{\sqrt{2\pi}} \sqrt{\ell}
    f(-\frac{y}{\sqrt{\ell}})}.
\end{equation}
The behavior of the expected close is presented in the figure below.
\begin{figure}[H]
    \centering
    \includegraphics[width=\linewidth]{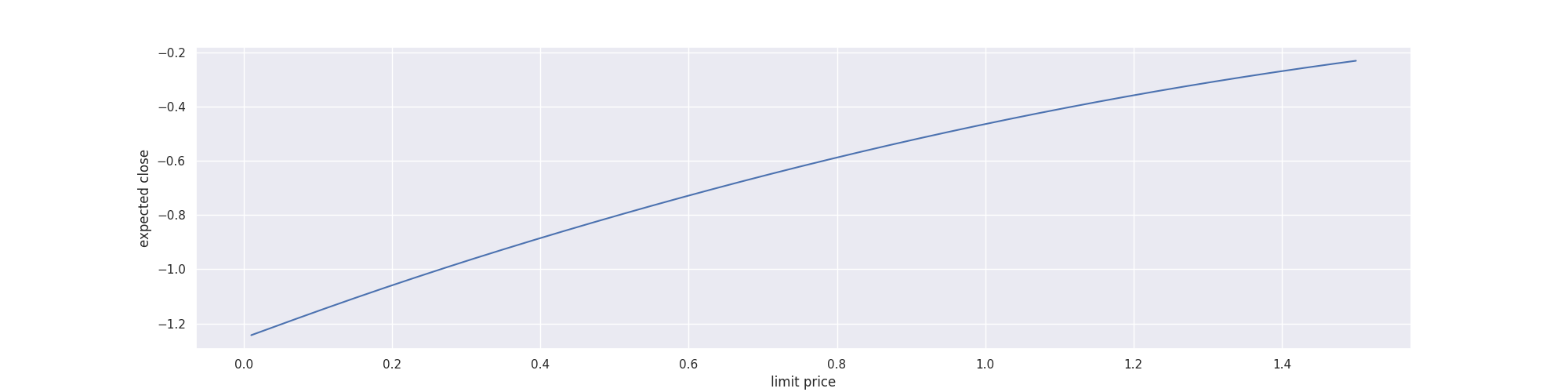}
    \caption{Expected close price given non-execution of a limit order}
\end{figure}

This is in line with the intuition. If the Brownian motion fails to reach a very
small threshold, then the Brownian motion has likely moved toward the other
direction. On the other hand, failing to pass a large threshold has little
effect on our inference.

This intuition can also be viewed from the conditional probability distribution
of the close price given a nonfill. The derivation of the following expression
can be found in \eqref{eq prob close given nonfill} in the Appendix:
\begin{equation}
    p(c|h \leq y) = \frac{1}{\sqrt{2 \pi}} e^{-\frac{1}{2} c^2}
    \frac{1 - e^{-2y(y - c)}}{1 - 2 \Phi(-y)}
    \mathbb{1}_{\{c < y\}}.
\end{equation}
\begin{figure}[H]
    \centering
    \includegraphics[width=\linewidth]{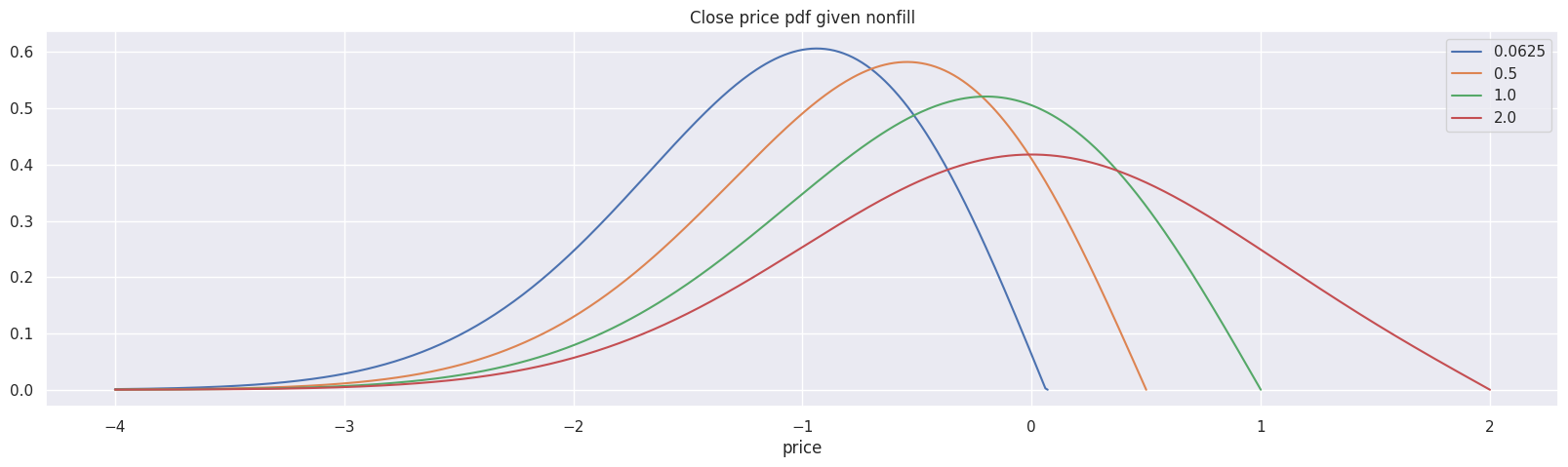}
    \caption{Close price pdf given non-execution of a limit order at
    various limit prices}
\end{figure}
For large values of $y$, this distribution is nearly normal. For values close
to zero, the left-skew becomes more pronounced. In fact, by expanding the
numerators and denominators of the distribution in a Taylor series about the
origin in $y$ and letting $y \to 0^+$, we obtain by L'Hopital's rule
the pointwise limit
\begin{equation} \label{c|h=0_pdf}
    p(c|h=0) = -c e^{-\frac{1}{2} c^2} \mathbb{1}_{\{c < 0\}},
\end{equation}
which after-the-fact can be upgraded to almost sure convergence.
Note that, up to a sign, this is the Rayleigh distribution, which is the
time $t = 1$ distribution of a Brownian meander.

Therefore, in some sense, we may interpret the limit $y$ in the
conditional pdf $p(c|h \leq y)$ as a parameter that controls the level of
interpolation between a Gaussian distribution ($y \to \infty$) on
the one hand, and the Rayleigh distribution ($y \to 0^+$) on the other.




Similarly, we compute the expected close given a fill:
\[
    \E(c | h > y) = y.
\]
The probability distribution of the close price given a fill can be derived from
the joint distribution of close and high as found in the Appendix. 
\begin{figure}[ht!]
    \centering
    \includegraphics[width=\linewidth]{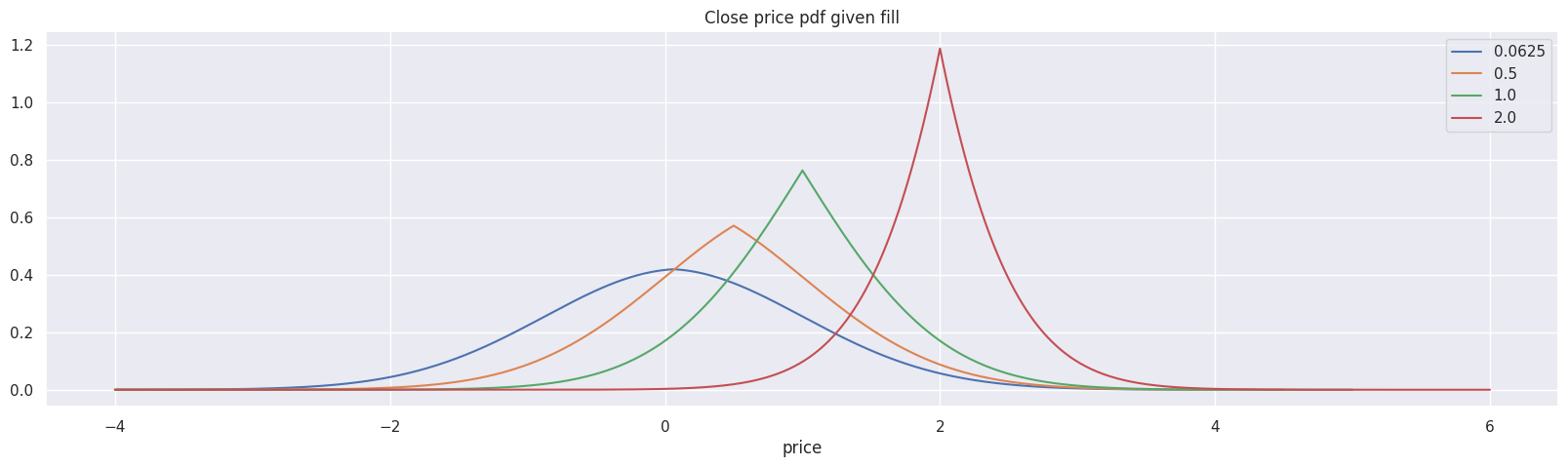}
    \caption{Close price pdf given a fill at various limit prices}
\end{figure}

\section{Basic order splitting}
\label{sec: order split}
In this section, we study the strategy to optimally execute an order. Our basic
strategy is that we place the limit order and reprice it regularly to remain
relevant to the market until the order is finally filled. The effectiveness of
the strategy depends upon the limit price we set and the trading horizon. 

We start by investigating a simple strategy where the limit price offset is
static and the trading horizon is infinite. Then, in the general case, we
formulate the problem as a Markov decision process (MDP). Our model allows us to
quantify and balance the profit, risk, and trading speed at the same time.

\subsection{Naive order splitting}

In the absence of delay, the execution time of a limit order is Levy
distributed. It is notable that the Levy distribution is a stable distribution:
if
$U, U_1, U_2$ are iid standard Levy and $b_1, b_2 > 0$, then
\[
  b_1 U_1 + b_2 U_2 \sim (b_1^{1/2} + b_2^{1/2})^2 U.
\]

Suppose we adopt a naive policy of equally splitting an order of size 1 into $n$
intervals of equal length. Volatility scales as $n^{-1/2}$, and so to keep
distributions comparable, we assume limits scale as $y/\sqrt{n}$. So, each
standard Levy distribution $U_i$ is multiplied by a scaled limit $y^2 / n$ and a
proportion of order $1 / n$. To evaluate the executed quantity within the unit
interval, we consider at time 1 the following random variable:
\[
  \sum_{i = 1}^n \frac{y^2}{n} \cdot \frac{1}{n} U_i \sim y^2 U.
\]
One way to interpret this is that evenly splitting orders and rescaling limits
according to volatility will not change the expected quantity executed, but will
reduce the variance at the cost of the expected profit.

Working at the level of expectations (rather than distributions), we may arrive
at the same conclusion. Since each of the $n$ orders is filled independently and
with equal probability $p$, the number of fills is represented by a binomially
distributed random variable $X \sim B(n, p)$. The expected fill rate is
\[
  \E \frac{1}{n} X  = \frac{1}{n} \cdot np = p
\]
and the dispersion can be calculated from
\[
  \Var(X) = \Var(\frac{1}{n} X) = \frac{1}{n^2} \cdot n p (1 - p)
  = \frac{p (1 - p)}{n}.
\]
With this approach, we see that, by splitting a parent order into $n$ child
orders, we may reduce the variance of the outcome by $1 / n$. In other words, we
achieve a tighter concentration around the expected fill rate.

There is a cost to increasing $n$, summarized by the following points:
\begin{itemize}
  \item Because vol per subinterval scales like $n^{-1/2}$, in real terms limit
      orders must be placed closer to the market (greater risk of market
      execution).
  \item The total overhead time (communication with exchange to submit, confirm,
      and cancel) scales linearly with $n$.
\end{itemize}

As a final note, we observe that equally subdividing a parent order into child
orders will generally lead to significant underfills.

\subsection{Reprice until filled}

Suppose we follow a simple strategy of splitting a parent order into at most $n$
child orders, each with the full amount of unexecuted quantity. For simplicity,
we assume that, if an order is filled, then it is fully filled. So, our strategy
is to submit repriced child orders at regular intervals until the first fill or
until we have placed $n$ orders.

Using our model and assumption on the delay introduced by repricing, we may
estimate the probability of a parent order fill and the probability of a market
order fill.

\subsubsection{Static repricing policy}

Suppose that we choose a fixed (volatility-scaled price offset) $y > 0$ at which
to price each limit order. Then the probability of a fill at interval $k$
follows a geometric distribution, and the probability of a fill within $n$
orders is given by the cdf evaluated at $n$:
\[
  P(\Fill) = 1 - (1 - 2 \Phi(-y))^n.
\]
To illustrate how this probability and the market-fill probability
change as a function of the limit price, number of intervals, and
interval latency percentage, we provide Table \ref{table:static}.

\begin{table}
\begin{tabular}{lllll}
  $y$  & $n$ & $P(\Fill)$ & $\% \; delay$ & $P(\market|\Fill)$ \\
  0.5  & 5   & 0.992      & 10            & 0.184              \\
  0.75 & 5   & 0.951      & 10            & 0.039              \\
  1    & 5   & 0.852      & 10            & 0.005              \\
  1.25 & 5   & 0.695      & 10            & 0.000              \\
  0.5  & 10  & 1.000      & 20            & 0.427              \\
  0.75 & 10  & 0.998      & 20            & 0.206              \\
  1.0  & 10  & 0.978      & 20            & 0.080              \\
  1.25 & 10  & 0.907      & 20            & 0.025              \\
\end{tabular}
\caption{Static repricing policy fill probabilities}
\label{table:static}
\end{table}

For child-order execution probabilities on the order of $1/2$, we may consider
$y = 0.674$ or $y = 2/3$, which have the following additional probabilities
attached to them.
If we choose $y = 0.674$, the probability of a crossing in the first 10\% of the
window is 0.033, while the probability of an execution during the window is
0.500. The probability of no execution for the parent order is about $0.5^n$,
which for $n = 5$ is about 0.031.
If we change to $y = 2/3$, then the early crossing probability is about 0.035,
the per-child execution probability about 0.505, and the non-execution of any
child orders about 0.033.

\subsubsection{Expected execution price under indefinite static repricing}

Let $y > 0$ be a fixed volatility-scaled price offset for a reprice-until-filled
policy, and let $p = p(y)$ denote the probability that an order is filled. The
number of trials $n$ required to execute is geometrically distributed with
parameter $p$.

Let $z$ denote the expected adverse price movement given no fill. If the first
(and only) fill is at step $k$, then that means that there were $k - 1$ non-fill
steps, with $k$-many adverse price movements of size $z$, followed by a fill at
offset $y$. Hence the expected execution price at step $k$ is $kz + y$ and this
occurs with probability $(1 - p)^{k - 1} p$. Since these events are disjoint, we
obtain the following expression for the expected execution price:
\begin{equation}
    \sum_{j = 0}^\infty (1 - p)^j (j z + y) p = y + \frac{1 - p}{p} z.
\end{equation}
For the standard BM (SBM) price model under consideration, the terms on the
right-hand side balance and cancel, so that the expected execution price is
zero, consistent with the martingale property of SBM.



\subsection{Markov Decision Process formulation}

Now we want to generalize our previous discussion to better suit the practical
purpose. Firstly, we assume a finite time horizon for order execution, which is
usually the case in practice. Secondly, we allow ``repricing" of the order when
submitting new child orders. The need to adjust the price is present in many
practical situations. For example, the trader might place the order more
aggressively when it is near the end of the trading horizon or if there is
strong conviction around the future price movement. 

Suppose we divide the entire trading horizon into multiple time intervals, and
attempt to execute a child order over each interval (stage) until one is filled.
The price we set in each stage affects the likelihood of execution in the
current stage, as well in the future stages, since there will be no future
activities given a fill. These properties can be modeled by a Markov Decision
Process (MDP). 

In short, an MDP provides a mathematical framework for modeling decision-making
in stochastic environments. It is defined by a tuple $(S, C, P, r)$ where $S$ is
the state space, $C$ is the control space, $P(x' | x, c)$ represents the
transition probabilities from state $x$ to $x'$ given control $c$, and $r(x, c)$
is the immediate reward function given state $x$ and control $c$. We specify the
choice of these components mathematically.  

We define the state space $S$ to be the possible number of shares held. In our
simplified problem, this is either 1 share, which we start with, or 0 shares,
following a successful child order execution:
\[
    S = \{1, 0\}.
\]
\begin{remark}
    In practice, any integer value of shares can be divided into sub-problems
    where we hold 1 share. Additionally, fractional shares can be mapped to
    integers.
\end{remark}
The control space $C$ consists of limit orders, determined by
``side",``quantity", and ``limit price". For simplicity, we only consider sell
orders and we assume that an order is always fully filled if executed. In this
case, the control space $C$ is defined by
\[
    C = \{y | y \in \R^+ \}
\]
where $y$ represents the limit price of the sell order. The benchmark is the
asset price at the instance of decision making.

We consider the finite horizon problem with $n$ steps, which can be visualized
in Figure \ref{pic}. There are two states: $1, 0$, where $0$ is an absorbing
state. We start at $t = 0$ in state 1, holding 1 share. At each stage $k$ with
given control $y_k$, the MDP has a chance $P_{y_k}(\Fill)$ of entering state 0,
where it terminates. We enforce the boundary condition that we place a market
order at step $n - 1$ if the order has not been fulfilled. The transition
probability only depends on the current state and control. Hence the problem is
Markov. In fact, the transition probabilities are equal to 
\begin{equation}
    P_{y_k}(\Fill) = P_{y_k}(\LimitFill) + P_{y_k}(\MarketFill)
\end{equation}
and 
\begin{equation}
    P_{y_k}(\NoFill) = 1 - P_{y_k}(\Fill).
\end{equation}
We use \eqref{eq prob limit fill approx} and \eqref{eq prob market fill} to compute the limit fill and market fill probabilities.

\begin{figure}
\centering
\begin{tikzpicture}
    \node (A1) at (0, 1) {1};
    \node (A2) at (2, 1) {1};
    \node (A3) at (4, 1) {1};
    \node (A4) at (6, 1) {1};
    \node (A5) at (7, 1) {...};
    \node (A6) at (8, 1) {1};
    \node (A7) at (10, 1) {1};
    \node (B1) at (2, -0.5) {0};
    \node (B2) at (4, -0.5) {0};
    \node (B3) at (6, -0.5) {0};
    \node (B4) at (7, -0.5) {...};
    \node (B5) at (8, -0.5) {0};
    \node (B6) at (10, -0.5) {0};
    \node (B7) at (12, -0.5) {0};

    \draw[->] (A1) -- (A2);
    \draw[->] (A2) -- (A3);
    \draw[->] (A1) -- (B1);
    \draw[->] (A2) -- (B2);
    \draw[->] (A3)  -- node[sloped, below] {$P_{y_k}(\Fill)$}  (B3);
    \draw[->] (A3)  -- node[above] {$P_{y_k}(\NoFill)$}  (A4);
    \draw[->] (A6) -- (A7);
    \draw[->] (A6) -- (B6);
    \draw[->] (A7) -- (B7);
\end{tikzpicture} 
\caption{MDP transition diagram}
\label{pic}
\end{figure}
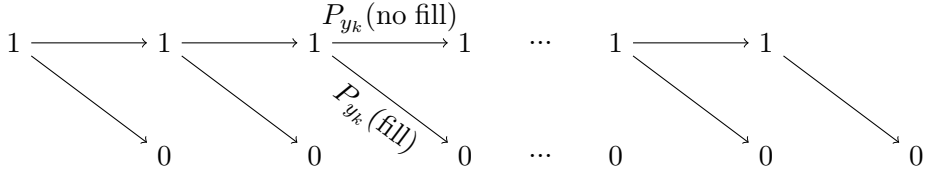

We define the immediate reward function to be the reward we would achieve under
a given state $q$ and control $y$. When we hold no shares, we simply set the
reward to be 0. Otherwise, the reward is stochastic, depending on the price
movement of the Brownian motion during the execution window. Given a limit fill,
the reward is the limit price offset minus applicable transaction costs. Given a
market fill, the reward is the value of the price Brownian motion at the time
when the order reached the order book, minus transaction costs. Given a
non-fill, the reward/loss is the price movement during our inaction. If the
price moves up, then we gain a profit by holding the share; otherwise, we incur
a loss through not having been able to sell. To make this precise, $r(q, y, B)$
is defined by
\begin{equation}
    r(q, y, B) =
    \begin{cases}
        y - \frac{1}{2} s - \makerfee &\text{given a limit fill} \\
        B_\ell - \frac{1}{2} s - \takerfee &\text{given a market fill} \\
        B_1 &\text{given no fill}
    \end{cases}
\end{equation}
Now we define the total reward function. Let the policy be given by $\pi_n =
(y_1, ..., y_n)$ where $y_k$ is the limit price offset at stage $k$. A
straightforward way to measure the profit is by calculating the expected total
reward:
\begin{equation}\label{def expected total reward}
    J_{\pi, n} = \E^\pi[\sum_{k=1}^n r(q_k, y_k, B) ].
\end{equation}
However, \eqref{def expected total reward} does not capture the risk tolerance
of the decision maker. In fact, the longer the order waits for execution, the
greater uncertainty there is in the price. For this reason, although setting
high limit prices in the early stages might result in a higher expected reward,
a risk-averse trader might prefer to sacrifice part of the potential profits in
exchange for smaller variance. In order to measure the level of risk aversion,
we define the following total reward function using exponential utility
function
\begin{equation}\label{def exp total reward}
    J^{\pi, n, \lambda} = - \frac{1}{\lambda} \log \E [e^{-\lambda
    \sum_{k=0}^{n-1} r(q_k, y_k)}].
\end{equation}
We clarify the notation used here:
\begin{enumerate}
    \item $q_k$ is the state at $t = k$, either 0 or 1.
    \item $r(q_k, y_k)$ is the immediate reward random variable given state
    $x_k$ and control $y_k$. By default, $r(0, y_k) = 0$, as the process is
    terminated.
    \item $\E^\pi$ refers to the expectation derived from the probability
    distribution associated with the policy $\pi$, i.e., the transition
    probabilities $p(y_k)$.
\end{enumerate}

We remark that, when $\lambda$ is positive, the exponential utility simulates a
decision maker who is risk-averse. This can be observed from the fact that the
risk index of the exponential utility function is positive. In fact, let 
\[
    U(z) = -\frac{1}{\lambda} e^{-\lambda z}.
\]
The risk index is defined as
\[
    I(z) := - \frac{U''(z)}{U'(z)} = \lambda.
\]
The sign of $I(z)$ indicates the risk tolerance of the investor, see Appendix G of \cite{bertsekas2012dynamic}. We have
\begin{itemize}
    \item $I(z) > 0$: risk-averse, 
    \item $I(z) = 0$: risk neutral,
    \item $I(z) < 0$: risk seeking.
\end{itemize}
Alternatively, one may observe that for small $\lambda$, the total exponential utility is approximately
\[
    J^{\pi, n, \lambda} = \E^\pi [\sum_k r(q_k, y_k)] - \frac{\lambda}{2} \Var[\sum_k r(q_k, y_k)] + \mathcal{O}(\lambda^2).
\]
This approximation can be derived from the second-order Taylor expansion of the
exponential function. Positive $\lambda$ implies that the variance serves as a
penalty.

Now we move on to formulating the backward induction. Denote the value function
as $V_{k}(q)$, where $k$ is the stage number. At stage $n - 1$, we are forced to
place a market order if $x_{n-1} = 1$. Hence, the boundary condition is defined
as
\begin{equation}\label{eq value func t=n-1}
    \begin{aligned}
        & V_{n-1}(1) = - \frac{1}{2} s - \takerfee, \\
        & V_{n-1}(0) = 0.
    \end{aligned}
\end{equation}
The induction relation can be derived as below:
\begin{equation}
    \begin{aligned}\label{eq exp induction}
        V_k(1) & = \sup_{\pi} -\frac{1}{\lambda} \log \E^\pi [e^{-\lambda
        \sum_{j=k}^{n} r_j(x_j, y_j)}] \\
        & = \sup_\pi -\frac{1}{\lambda} \log \left\{\E^{y_k}[e^{-\lambda r_k} |
        \LimitFill] \cdot P_{y_k}(\LimitFill) + \E^{y_k}[e^{-\lambda
        r_k} | \MarketFill] \cdot P_{y_k}(\MarketFill) \right.\\
        & \left.+ \E^\pi[e^{-\lambda r_k} e^{-\lambda \sum_{j = k+1}^{n}} | \NoFill]
        \cdot P_{y_k}(\NoFill) \right\} \\
        & = \sup_{y_k} -\frac{1}{\lambda} \log \left\{ \E^{y_k}[e^{-\lambda r_k} |
        \LimitFill] \cdot P_{y_k}(\LimitFill) + \E^{y_k}[e^{-\lambda r_k} |
        \MarketFill] \cdot P_{y_k}(\MarketFill) \right. \\
        & \left. + \E^{y_k}[e^{-\lambda r_k} | \NoFill] e^{-\lambda V_{k+1}(1)}
        \cdot P_{y_k}(\NoFill) \right\}
    \end{aligned}
\end{equation}
and 
\[
    V_{k}(0) = 0.
\]

We now compute the expectation inside \eqref{eq exp induction}:
\begin{align*}
    I & := \E^{y_k} [e^{-\lambda r(1,
    y_k)} | \Fill] P(\Fill) \\
    II & := \E^{y_k} [e^{-\lambda r(1,
    y_k)} | \NoFill] P(\NoFill) \\
\end{align*}

The first term $I$ is equal to
\begin{align*}
    \E [e^{-\lambda r(1,
    y_k)} | \LimitFill] P(\LimitFill) + \E [e^{-\lambda r(1,
    y_k)} | \MarketFill] P(\MarketFill).
\end{align*}

In the case of a limit fill, the reward $r(x_k, y_k)$ is not stochastic.
However, the reward given a market fill depends on the best available bid at time
$\ell$, and hence is still stochastic:
\begin{align*}
    & r(1, y_k | \LimitFill) = y_k - \frac{1}{2} s - \makerfee, \\
    & r(1, y_k | \MarketFill) = B_\ell - \frac{1}{2} s - \takerfee.
\end{align*}
We compute the expected reward given a market fill by conditioning on $B_\ell$:
\begin{align*}
    \E [e^{-\lambda r(1,
    y_k)} | \MarketFill] 
    & = \frac{1}{P(\MarketFill)} \int_{y_k}^{+\infty} e^{- \lambda (z -
    \frac{1}{2} s - c_{taker})} \frac{1}{\sqrt{l}}f(\frac{z}{\sqrt{l}}) dz \\
    & = \frac{1}{P(\MarketFill)} e^{\lambda (\frac{1}{2}s + c_{taker})}
    \frac{1}{\sqrt{2\pi \ell}} \int_{y_k}^{+\infty} e^{-\lambda z}
    e^{-\frac{z^2}{2\ell}} dz \\
    & = \frac{1}{P(\MarketFill)} e^{\lambda (\frac{1}{2}s + c_{taker})}
    \Phi(-\frac{y_k}{\sqrt{\ell}} - \lambda \sqrt{\ell}) e^{-\frac{\lambda^2
    \ell}{2}}.
\end{align*}
Thus
\begin{equation}
    I = e^{-\lambda (y_k - \frac{1}{2}s - \makerfee)} P(\LimitFill) + e^{\lambda
    (\frac{1}{2}s + \takerfee)} \Phi(-\frac{y_k}{\sqrt{\ell}} - \lambda
    \sqrt{\ell}) e^{-\frac{\lambda^2 \ell}{2}},
\end{equation}
where $l$ is the delay.
The second term $II$ can be evaluated by
\begin{align}\label{eq exp utility II}
    II & = \E[e^{-\lambda (B_1 + V_{k+1}(1))} | \sup_{\ell \leq t < 1} B_t <
    y_k] P(\sup_{\ell \leq t < 1} B_t < y_k) \\
    & = e^{-\lambda V_{k+1}(1)} \int_{-\infty}^{y_k} \int_{-\infty}^{y_k-z}
    e^{-\lambda (z+c)} p_{1-\ell}(c, h \leq y_k - z) dc \frac{1}{\sqrt{\ell}}
    f(\frac{z}{\sqrt{\ell}}) dz,
\end{align}
where $p_{1-l}(c, h \leq y-z)$ is the probability distribution of the close $c$
of a Brownian motion starting from $0$ with volatility $1$ over a time interval
of length $1-\ell$ given its high $h$ satisfies $h < y-z$. Due to scaling, we
have
\[
    p_{1-\ell}(c, h < y-z) = \frac{1}{\sqrt{1-\ell}}p(\frac{c}{\sqrt{1-\ell}}, h
    < \frac{y-z}{\sqrt{1-\ell}}).
\]
We will use an approximation of the double integral in $II$. The derivation of
the approximation can be found in the Appendix.

\subsection{Optimal policy}
We use \eqref{eq exp induction} to compute the optimal policy. In particular, we
set the value function at time $t = n-1$ by \eqref{eq value func t=n-1}. At time
$t = k-1$, we pick the limit price $y_k$ to maximize the RHS of \eqref{eq exp
induction} and recursively perform the backward dynamic programming. Note that,
implicit in the interpretation of the solution is the assumption that we place
no further limit orders once one has been filled.

The key inputs that influence the dynamic programming problem are:
\begin{itemize}
  \item spread
  \item maker/taker fees $\makerfee$, $\takerfee$
  \item maximum number of child orders $n$.
\end{itemize}
In addition, we have the parameter $\lambda > 0$, which indicates risk aversion.
Larger $\lambda$ implies a greater level of risk aversion. It is then expected
that the trader will prefer executing the order as soon as possible in order to
avoid the price uncertainty caused by an extended trading window. This tendency
translates to a lower limit price in the resulting optimal policy. We interpret
it as follows: the risk-averse trader is willing to sacrifice some profit (which
is caused by a lower execution price and a higher chance of paying the taker
fee) in exchange for reduced variance. Below is the plot of the optimal policy
for a range of $\lambda$, where the other parameters are set as: $\takerfee =
0.1, \makerfee = -0.1, \text{spread} = 0.05, \text{delay} = 0.1$.
\begin{figure}[ht!]
    \centering
    \includegraphics[width=\linewidth]{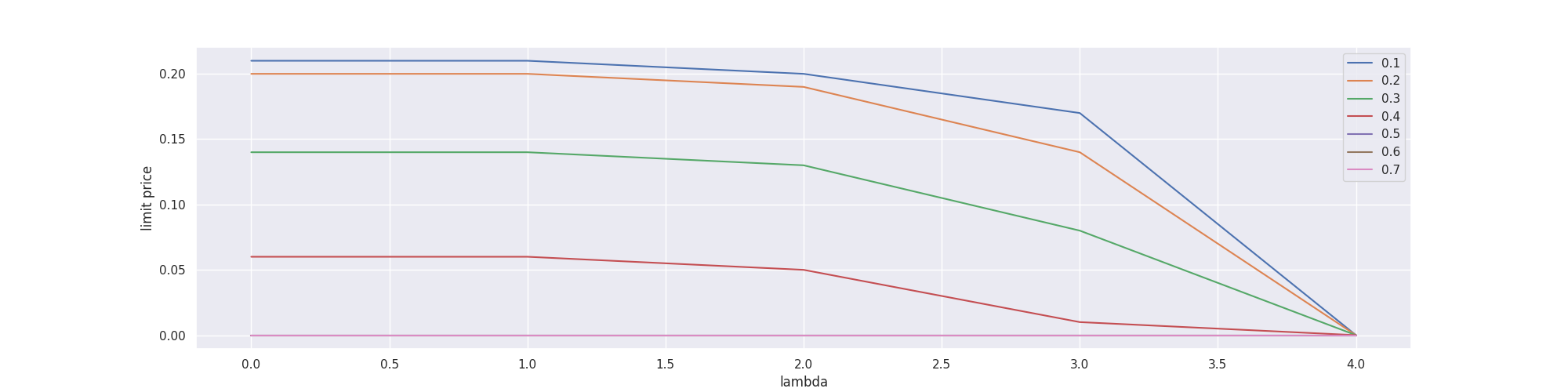}
    \caption{Optimal policy vs $\lambda$}
\end{figure}

\subsection{Discussion}

Interestingly, the relationship of spread to volatility does not appear in the
reduced immediate reward function, and so is not relevant in the MDP. This may
be understood as a consequence of the martingale property of price movements:
while a higher volatility increases the fill probability and expected gain given
a fill, it offsets this expected gain with an expected loss in the case of a
non-fill. On the other hand, the relationship of spread to volatility does not
disappear from the problem entirely. It will influence the particular values of
the value function $v$, which in turn can be compared to the expected instrument
volatility over the course of the parent order. In cases where the spread is
large compared to the expected volatility over the duration of the parent order,
the ``optimal policy" may still be a bad or losing policy. In part, this is by
construction, since we force execution at the boundary if execution is not
achieved earlier.

In practice, we may decide not to trade based upon an unfavorable value function
value. We may also reformulate the problem so that we do not force execution at
the boundary (we only execute passively with limits that are favorable).

Finally, we note that the MDP formulation is not tied to the BM model we have
applied. That is, the probability estimates derived from the model may be
replaced in the MDP formulation with estimates derived from an alternative model
or from empirical statistics if desired (e.g., see \cite{Ba20}[\S 9.1]).

\section{Monte Carlo simulation}
\label{sec: mc sim}
Based on the analysis above, the expected value of a dynamic limit order
repricing policy may be estimated based on various inputs. However, working
strictly within an analytical framework becomes cumbersome as we seek to
understand other properties, such as the standard deviation of the value
function of a policy.

For this reason, we pursue Monte Carlo simulation to estimate other
policy-related statistics of interest.

We first estimate the standard deviation of the policy. Suppose we aim to
execute the parent order over a time interval $I = [0, n]$. In particular, we
split $I$ into sub-intervals $I_k = [k-1, k)$, $k=1,...,n$ and place a limit order at the
start of $I_k$ until the parent order is fully filled. Suppose the limit price
for the $t$-th child order is given by the policy $(y_k)_{k=1,...,n}$, where the
benchmark is the corresponding arrival price at $t=k$. Then the expected
execution price during $I_k$ is equal to the expected arrival price, given that
the child orders were not filled in $I_j$ for $1 \leq j \leq k-1$, plus $y_k$,
the limit price offset. We express it by
\begin{equation}\label{eq expected execution price}
    \sum_{j=1}^{k-1} \E[B_1^{(j)} | \NoFill] + y_k,
\end{equation}
where $B_1^{(j)}$ is the value of the Brownian motion $B_t - B_{j-1}$ at $t =
j$. We use (\ref{eq expected close given no fill}) to compute the expectation.
Below, we show the plot of the expected execution price derived from (\ref{eq
expected execution price}) and compare it with the simulation results.
\begin{figure}[H]
    \centering
    \includegraphics[width=\linewidth]{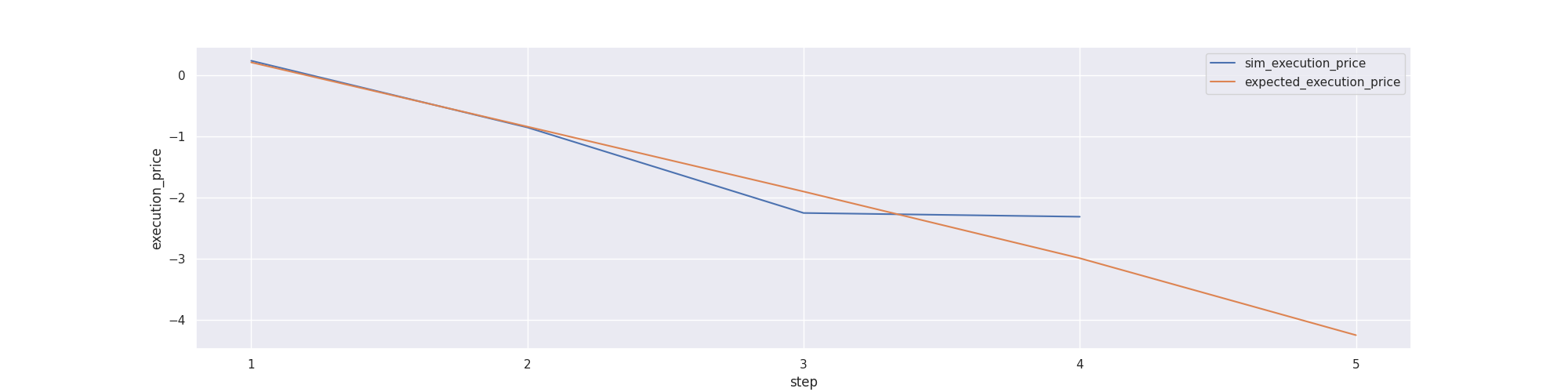}
    \caption{Expected and average simulated execution prices}
\end{figure}

The standard deviation of the execution price scales like
$\mathcal{O}(\sqrt{n})$ where $n$ is the length of the trading horizon. This can
be seen from (\ref{eq expected execution price}) and the square root of time
property of Brownian motion. In the plot below, we compare the rescaled graph of
$y = \sqrt{x}$ and the standard deviation of the simulated net reward with the risk aversion parameter $\lambda = 0.1$.
\begin{figure}[H]
    \centering
    \includegraphics[width=\linewidth]{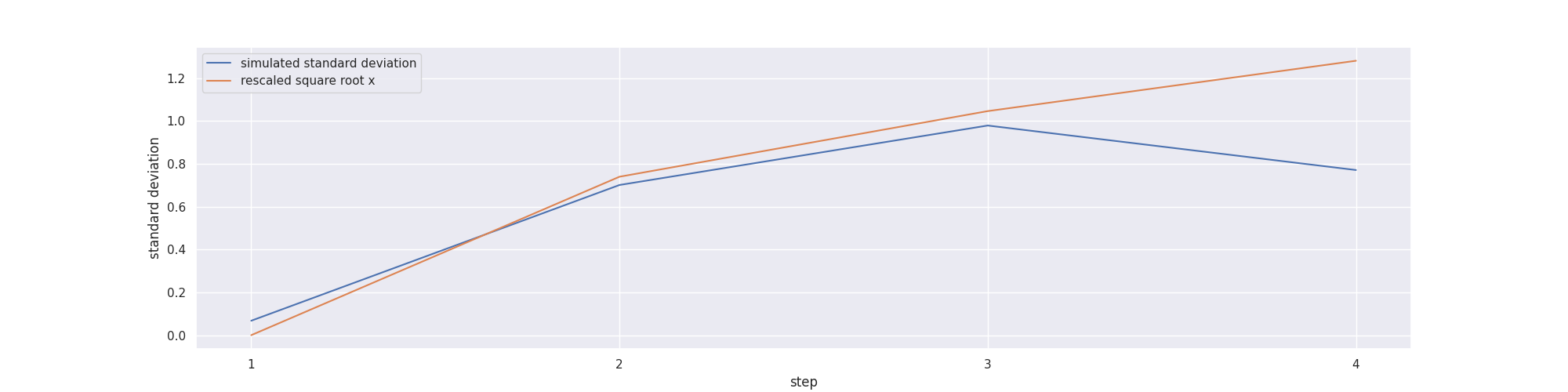}
    \caption{Standard deviation of simulated net reward}
\end{figure}
\begin{remark}
    The last step deviates from $y = c\sqrt{x}$ due to sampling error, as $96\%$
    of the orders in our simulation were executed in step 1 and step 2. 
\end{remark}

Next, we examine how the risk tolerance parameter $\lambda$ is related to the
execution properties, including net profits, variance, percentage of market
order execution, and execution speed. As is expected, the variance decreases as
$\lambda$ increases, because placing the order closer to the market allows for
faster execution, which also reduces price uncertainty as well as net profits.
This can be seen from the following
figures.
\begin{figure}[H]
    \centering
    \includegraphics[width=\linewidth]{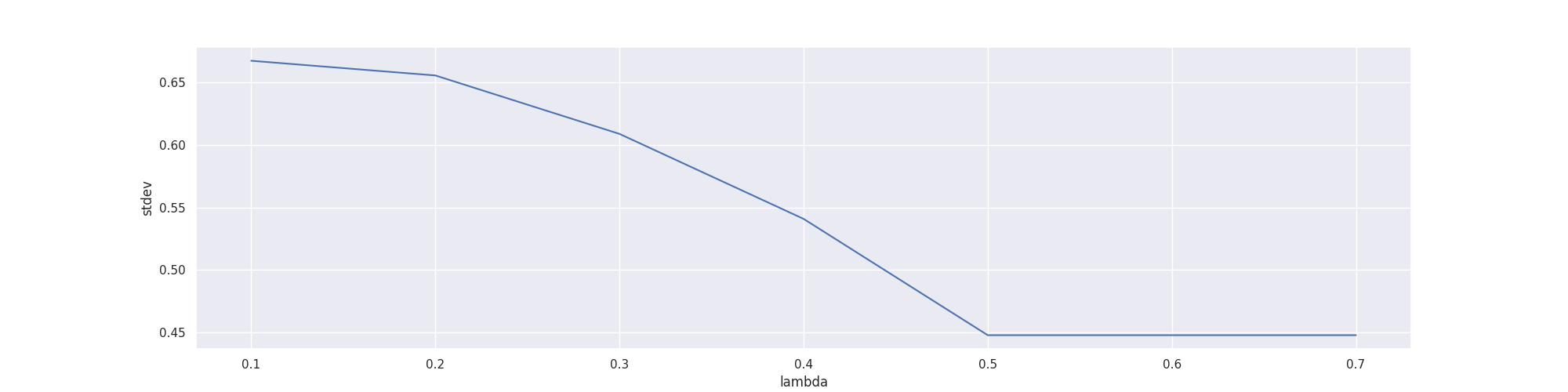}
    \caption{Standard deviation of net reward vs risk aversion parameter $\lambda$}
\end{figure}

The execution speed is represented in the distribution of the step when the
order gets filled. Column values in Table \ref{table:perc} represent the
percentage of filled orders in the specific step.

\begin{table}
\begin{center}
    \begin{tabular}{l|l|l|l|l}
          step  & $\lambda = 0.1$ & $\lambda = 0.3$ & $\lambda = 0.5$ & $\lambda = 0.7$ \\
          1  & 0.809   & 0.845      & 0.920            & 0.920              \\
          2 & 0.157   & 0.133      & 0.077            & 0.077              \\
          3    & 0.028   & 0.019      & 0.003            & 0.003              \\
          4 & 0.006   & 0.003      & 0.000            & 0.000             \\
    \end{tabular}
\end{center}
\caption{Percentage of filled orders in a given step for given risk aversion}
\label{table:perc}
\end{table}
The cost of faster execution and lower variance is decreased net profits. A
greater percentage of orders are executed as market orders, as Figure
\ref{fig:perc} shows.

\begin{figure}[H]
    \centering
    \includegraphics[width=\linewidth]{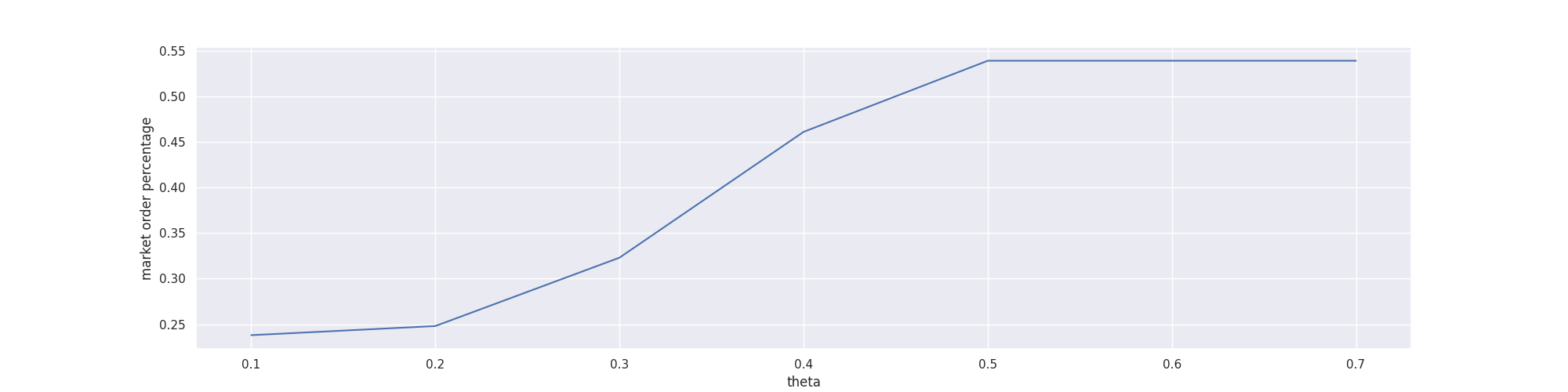}
    \caption{Percentage of market orders}
    \label{fig:perc}
\end{figure}
Of course, the expected execution prices themselves become lower with lower
limit prices. These two factors reduce net profits.
\begin{figure}[H]
    \centering
    \includegraphics[width=\linewidth]{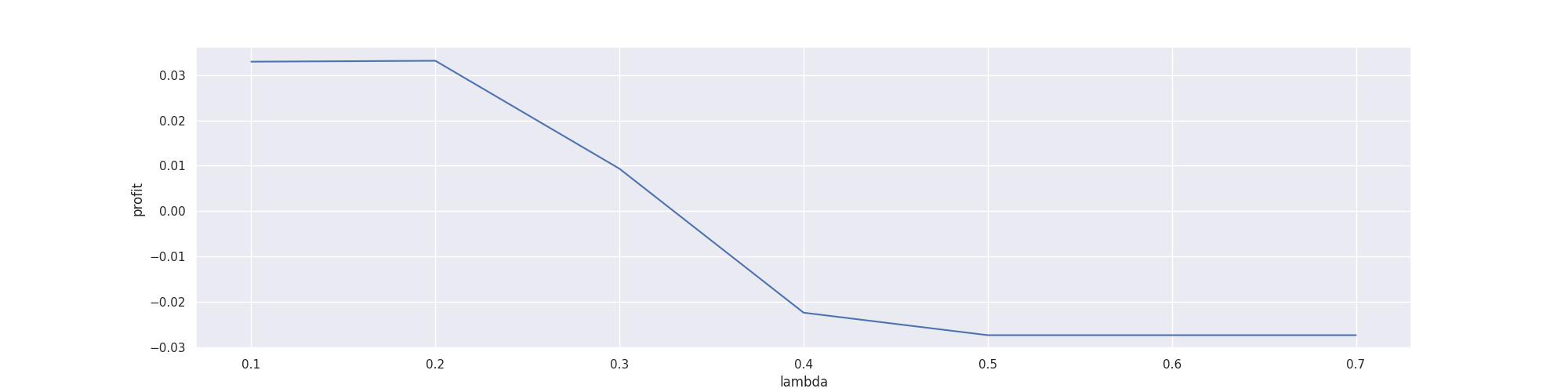}
    \caption{Mean net profits vs $\lambda$}
\end{figure}

\appendix
\section{Calculations}
We begin with properties of Brownian motion related to its high and close. We
let $h$ denote the high, i.e.,
\[
    h = \sup_{0 \leq t \leq 1} B_t,
\]
and $c$ the close
\[
    c = B_1.
\]
As is well known, the joint distribution of the close and high of a standard
Brownian motion on the unit interval is
\[
  p(c, h) = \frac{2(2h - c)}{\sqrt{2\pi}} \exp\left(- \frac{(2h - c)^2}{2} \right)
  \mathbb{1}_{\{h > 0, c < h\}}.
\]
The conditional distribution is, by definition,
\[
    p(c|h) = \frac{p(c, h)}{p(h)}.
\]
A straightforward calculation shows that
\begin{equation} \label{h_pdf}
    p(h) = \int_0^h p(c, h) dc = \sqrt{\frac{2}{\pi}} e^{-\frac{h^2}{2}}
\end{equation}
and so
\begin{equation} \label{c|h_pdf}
    p(c|h) = (2h - c) \exp\left( -\frac{1}{2} \left((2 h - c)^2 - h^2 \right) \right).
\end{equation}



Finally, we compute the marginal probability distribution of the close $c$:
\[
    p(c|h \leq y).
\]
That is, we know that the high $h$ of the SBM did not exceed the limit $y$, but
we do not take into account any other information about $h$ (e.g., its precise
value).

We calculate
\begin{equation} \label{c,h<=y_pdf}
    p(c, h \leq y) =
    \int_0^y p(c, h) dh =
    \frac{1}{\sqrt{2 \pi}} \left(
        e^{-\frac{1}{2}c^2} -
        e^{-\frac{1}{2}(2 y - c)^2}
        \right)
    \mathbb{1}_{\{c < y\}}.
\end{equation}
Combining this with
\[
    p(c|h) p(h) = p(c, h)
\]
and \eqref{h_pdf}, we obtain
\begin{equation}\label{eq prob close given nonfill}
    \begin{aligned}
        p(c|h \leq y) &=
        \frac{\int_0^y p(c|h) p(h) dh}{ \int_0^y p(h) dh} \\
        &= \frac{
            e^{-\frac{1}{2}c^2} -
            e^{-\frac{1}{2}(2 y - c)^2}
        }{
            \sqrt{2 \pi} (1 - 2 \Phi(-y))
        } \mathbb{1}_{\{c < y\}} \\
        &= \frac{1}{\sqrt{2 \pi}} e^{-\frac{1}{2} c^2}
        \frac{1 - e^{-2y(y - c)}}{1 - 2 \Phi(-y)}
        \mathbb{1}_{\{c < y\}}.
    \end{aligned}
\end{equation}

\begin{lem}\label{lem exp inner int}
    We compute $\int_{-\infty}^{y} e^{-\lambda c} p(c, h \leq y) dc$, which is
    used in \eqref{eq exp utility II}. The expression is equal to
\[
    e^{\frac{1}{2}\lambda^2} \Phi(\lambda + y) - e^{\frac{1}{2}\lambda^2 - 2y\lambda} \Phi(\lambda - y).
\]
\end{lem}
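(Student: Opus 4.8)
The plan is to substitute the closed form for $p(c, h \leq y)$ obtained in \eqref{c,h<=y_pdf}, namely
\[
    p(c, h \leq y) = \frac{1}{\sqrt{2\pi}}\left( e^{-\frac{1}{2}c^2} - e^{-\frac{1}{2}(2y-c)^2} \right)\mathbb{1}_{\{c < y\}},
\]
into the integral and split it into the two pieces
\[
    A = \frac{1}{\sqrt{2\pi}}\int_{-\infty}^{y} e^{-\lambda c}\, e^{-\frac{1}{2}c^2}\, dc,
    \qquad
    B = \frac{1}{\sqrt{2\pi}}\int_{-\infty}^{y} e^{-\lambda c}\, e^{-\frac{1}{2}(2y-c)^2}\, dc,
\]
so that the quantity we want equals $A - B$. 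Each of these is a truncated Gaussian integral, and the whole proof is just completing the square twice and recognizing the result in terms of $\Phi$.

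For $A$, I would write $-\lambda c - \tfrac12 c^2 = \tfrac12\lambda^2 - \tfrac12 (c+\lambda)^2$, pull out $e^{\frac12\lambda^2}$, and change variables $u = c + \lambda$ (which sends the upper limit $c = y$ to $u = y + \lambda$), giving $A = e^{\frac12\lambda^2}\,\Phi(y+\lambda)$. For $B$, expand $(2y-c)^2 = 4y^2 - 4yc + c^2$, so that the exponent is $-\tfrac12 c^2 + (2y-\lambda)c - 2y^2$; completing the square in $c$ yields $\tfrac12(2y-\lambda)^2 - 2y^2 - \tfrac12\big(c - (2y-\lambda)\big)^2$, and the constant simplifies to $\tfrac12(2y-\lambda)^2 - 2y^2 = \tfrac12\lambda^2 - 2y\lambda$. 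Changing variables $u = c - (2y-\lambda)$ sends $c = y$ to $u = \lambda - y$, so $B = e^{\frac12\lambda^2 - 2y\lambda}\,\Phi(\lambda - y)$. Subtracting gives exactly
\[
    e^{\frac{1}{2}\lambda^2}\Phi(\lambda + y) - e^{\frac{1}{2}\lambda^2 - 2y\lambda}\Phi(\lambda - y),
\]
as claimed.

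There is no genuine obstacle here; the only thing to watch is the algebra in the second completion of the square (keeping track of the $-2y^2$ term and the sign of $\lambda$ inside $\Phi(\lambda - y)$ versus $\Phi(y - \lambda)$), and confirming that the integrand decays at $-\infty$ so the improper integrals converge and the splitting is legitimate. I would present $A$ in one line, devote two or three lines to the exponent manipulation in $B$, and then state the final combination.
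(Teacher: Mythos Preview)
Your proposal is correct and follows exactly the same route as the paper: substitute the expression \eqref{c,h<=y_pdf} for $p(c, h \leq y)$, split the integral into the two Gaussian pieces, and complete the square in each. The paper's proof merely writes down the combined integrand and says ``direct computation yields the result,'' so your version is in fact more detailed than what appears there.
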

\begin{proof}
    By \eqref{c,h<=y_pdf}, we have
    \begin{equation}
        \begin{aligned}
            \int_{-\infty}^y e^{-\lambda c} p(c, h \leq y) dc & = \frac{1}{\sqrt{2\pi}} \int_{-\infty}^y e^{-\frac{1}{2}c^2 - \lambda c} - e^{-\frac{1}{2} c^2 - (\lambda - 2y) c - 2y^2} dc .
        \end{aligned}
    \end{equation}
    Direct computation yields the result.
\end{proof}

\begin{lem}
    Justification of \eqref{eq prob limit fill approx}.
\end{lem}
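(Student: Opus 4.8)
The plan is to extract from the exact integral \eqref{eq prob limit fill int} its leading term and first correction in the small parameter $\sqrt{\ell}$. Rescaling $z = u\sqrt{\ell}$ turns \eqref{eq prob limit fill int} into
\[
  P_y(\LimitFill) = \int_{-\infty}^{y/\sqrt{\ell}} 2\,\Phi\!\left(\frac{u\sqrt{\ell}-y}{\sqrt{1-\ell}}\right) f(u)\,du .
\]
Using $\sqrt{1-\ell} = 1 + O(\ell)$, to first order in $\sqrt{\ell}$ the argument of $\Phi$ is $u\sqrt{\ell}-y$, and a first-order Taylor expansion about $-y$ gives $\Phi(u\sqrt{\ell}-y) = \Phi(-y) + u\sqrt{\ell}\,f(y) + O(\ell u^2)$. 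Substituting this and invoking the elementary identities $\int_{-\infty}^{a} f(u)\,du = \Phi(a)$ and $\int_{-\infty}^{a} u f(u)\,du = -f(a)$ with $a = y/\sqrt{\ell}$ yields
\[
  P_y(\LimitFill) \approx 2\Phi(-y)\Bigl(1 - \Phi\!\bigl(-\tfrac{y}{\sqrt{\ell}}\bigr)\Bigr) - 2\sqrt{\ell}\,f(y)\,f\!\bigl(\tfrac{y}{\sqrt{\ell}}\bigr).
\]
The first term is precisely the ``almost-independent'' product $P(T_y<1)\,P(B_\ell<y)$ discussed in the Remark after \eqref{eq prob limit fill approx}. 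Replacing the slowly varying prefactor $f(y)$ in the second term by $f(0) = 1/\sqrt{2\pi}$ --- a harmless modification in the practical range of modest limit prices, since it only rescales the already exponentially small factor $f(\tfrac{y}{\sqrt{\ell}})$ --- reproduces the stated correction $\tfrac{2}{\sqrt{2\pi}}\sqrt{\ell}\,f(-\tfrac{y}{\sqrt{\ell}})$ of \eqref{eq prob limit fill approx}, up to a sign convention to be pinned down below.

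A cleaner conceptual packaging of the same computation recognizes \eqref{eq prob limit fill int} as twice a bivariate-normal orthant probability: conditioning on $B_\ell$ shows $P_y(\LimitFill) = 2\,P\!\left(B_\ell - Y \ge 0,\ B_\ell < y\right)$ for an independent $Y \sim N(y, 1-\ell)$, the pair $(B_\ell,\,B_\ell - Y)$ is jointly Gaussian with correlation $\rho = \sqrt{\ell}$, and after standardizing this equals $2\,\Phi_2\!\left(-y, \tfrac{y}{\sqrt{\ell}}; -\sqrt{\ell}\right)$, where $\Phi_2(\cdot,\cdot;\rho)$ is the standard bivariate-normal cdf. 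Plackett's identity $\partial_\rho \Phi_2(h,k;\rho) = \varphi_2(h,k;\rho)$ then produces the $\rho=0$ factorization $2\Phi(-y)\bigl(1-\Phi(-\tfrac{y}{\sqrt{\ell}})\bigr)$ as leading term and $-2\sqrt{\ell}\,\varphi_2\!\bigl(-y,\tfrac{y}{\sqrt{\ell}};0\bigr) = -2\sqrt{\ell}\,f(y)f(\tfrac{y}{\sqrt{\ell}})$ as first correction, matching the integral computation; it also makes rigorous the sense in which $\{T_y<1\}$ and $\{B_\ell<y\}$ are ``almost independent'', namely their correlation is $O(\sqrt{\ell})$.

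The main obstacle is turning this into an honest error estimate valid in the claimed regime $\ell < 0.2$, and three points require care. (i) The first-order Taylor expansion of $\Phi(u\sqrt{\ell}-y)$ breaks down near the cutoff $u \approx y/\sqrt{\ell}$, where the argument approaches $0$; one must note that the Gaussian weight $f(u)$ there is of order $e^{-y^2/2\ell}$ and hence controls this breakdown. (ii) Several discarded contributions --- the $O(\ell)$ effect of setting $\sqrt{1-\ell} = 1$, the $O(\ell u^2)$ Taylor remainder, and in the $\Phi_2$ picture the variation of $\varphi_2$ in $\rho$ across $[0,\sqrt{\ell}]$ --- are algebraically of size $O(\ell)$, whereas the retained correction $\sqrt{\ell}\,f(\tfrac{y}{\sqrt{\ell}})$ is exponentially small in $1/\ell$; so the statement is really about the practical regime in which $\ell$ is small but $y/\sqrt{\ell}$ remains $O(1)$, where the two scales are numerically comparable, and a precise version should quote a uniform bound over the relevant box of $(y,\ell)$ values rather than a bare $\ell\to0$ asymptotic. (iii) The sign of the correction must be tracked carefully; the sanity check $y\to 0^{+}$, where $P_y(\LimitFill)$ and $P_y(\NoFill) = 1 - P_y(\MarketFill) - P_y(\LimitFill)$ must both stay in $[0,1]$, fixes it.
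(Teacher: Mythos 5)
Your first computation is essentially the paper's own argument in different clothing: the paper splits the integrand of \eqref{eq prob limit fill int} as $2\Phi(-y)$ plus an increment, linearizes that increment with slope $f(0)=1/\sqrt{2\pi}$ in the regime $y\sim\sqrt{\ell}$, and evaluates the resulting Gaussian integrals exactly as you do via $\int_{-\infty}^a f(u)\,du=\Phi(a)$ and $\int_{-\infty}^a u f(u)\,du=-f(a)$; your Taylor expansion about $-y$ followed by the replacement $f(y)\mapsto f(0)$ is the same bookkeeping, and neither version carries rigorous error bounds, so your caveats (i)--(ii) apply equally to the paper's derivation. What is genuinely different, and a real improvement, is your second packaging: writing $P_y(\LimitFill)=2\,\Phi_2\bigl(-y,\,y/\sqrt{\ell};\,-\sqrt{\ell}\bigr)$ via the reflection principle and extracting the first-order term with Plackett's identity gives an exact closed-form reformulation, makes the paper's ``almost independent'' remark quantitative (the correlation is $O(\sqrt{\ell})$), and is the natural starting point for the uniform error bound you rightly note is absent. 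One point you should not leave open: the sign. Your correction $-2\sqrt{\ell}\,f(y)f(y/\sqrt{\ell})$ is correct, and it agrees with the final line of the paper's own appendix computation, which likewise produces $-\tfrac{2}{\sqrt{2\pi}}\sqrt{\ell}\,f(-\tfrac{y}{\sqrt{\ell}})$; the plus sign displayed in \eqref{eq prob limit fill approx} (and propagated into \eqref{eq expected close given no fill}) is inconsistent with that derivation and fails your own sanity check, since as $y\to 0^+$ it gives $P_y(\LimitFill)\approx\tfrac12+\sqrt{\ell}/\pi$ while $P_y(\MarketFill)=\tfrac12$, forcing $P_y(\NoFill)<0$. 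So commit to the minus sign rather than deferring it: the discrepancy is a sign typo in the displayed equation, not a gap in your argument.
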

\begin{proof}
We start with \eqref{eq prob limit fill int}.

\begin{equation}
    \begin{aligned}\label{eq prob limit expansion}
        &\int_{-\infty}^{y} 2\Phi(-\frac{y-z}{\sqrt{1-\ell}}) \frac{1}{\sqrt{\ell}} f(\frac{z}{\sqrt{\ell}}) dz  \\
        = & \int_{-\infty}^{y} 2\Phi(-y) \frac{1}{\sqrt{\ell}} f(\frac{z}{\sqrt{\ell}}) dz + \int_{-\infty}^{y} 2 \left( \Phi(-\frac{y-z}{\sqrt{1-\ell}}) - \Phi(-y) \right) \frac{1}{\sqrt{\ell}} f(\frac{z}{\sqrt{\ell}}) dz \\
        = & 2\Phi(-y) (1 - \Phi(-\frac{y}{\sqrt{\ell}})) 
        + \int_{-\infty}^{y} 2 \left( \Phi(-\frac{y-z}{\sqrt{1-\ell}}) - \Phi(-y) \right) \frac{1}{\sqrt{\ell}} f(\frac{z}{\sqrt{\ell}}) dz \\
    \end{aligned}
\end{equation}

We have for small $y$
\[
    \Phi(y) \approx \frac{1}{2} + \frac{1}{\sqrt{2\pi}} y.
\]
For the integral term in \eqref{eq prob limit expansion}, we can replace
$\Phi(-\frac{y-z}{\sqrt{1-\ell}})$ by $\Phi(-y+z)$ with only lower order errors.
Note that for $y \sim \sqrt{\ell}$,
\begin{enumerate}
    \item The function $\frac{1}{\sqrt{\ell}}f(\frac{z}{\sqrt{\ell}})$ is
    symmetric in $z$.
    \item The function $\Phi(-y + z) - \Phi(-y)$ is ``almost" odd in $z$ for $z
    \in [-y, y]$. So, we can discard the integral from $-y$ to $y$ incurring
    only a lower order error.
\end{enumerate}
This reduces the integral term to
\begin{align*}
    & \int_{-\infty}^{-y} 2\left( \Phi(-y+z) - \Phi(-y) \right) \frac{1}{\sqrt{\ell}} f(\frac{z}{\sqrt{\ell}}) dz \\
    \approx & \frac{2}{\sqrt{2\pi}} \int_{-\infty}^{-y} z \frac{1}{\sqrt{\ell}} f(\frac{z}{\sqrt{\ell}}) dz \\
    = & - \frac{2}{\sqrt{2\pi}} \sqrt{\ell} f(-\frac{y}{\sqrt{\ell}}).
\end{align*}
Collecting all terms yields \eqref{eq prob limit fill approx}.
\end{proof}

The following figure shows that the accuracy of the approximation when $\ell =
0.05$.
\begin{figure}[ht!]
    \centering
    \includegraphics[width=16 cm]{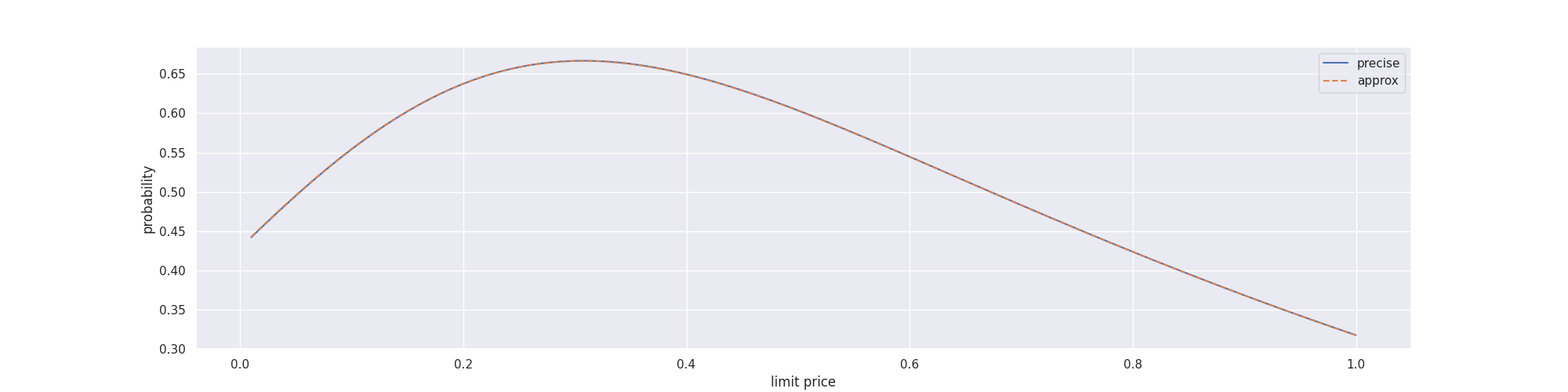}
    \caption{Accuracy of the approximation \eqref{eq prob limit fill approx}}
\end{figure}

\begin{lem}
    We approximate \eqref{eq exp utility II} by
    \begin{align}\label{eq approx exp II}
    & \Phi(\sqrt{1-\ell} \lambda + \frac{y}{\sqrt{1-\ell}}) \Phi(\frac{y}{\sqrt{\ell}} + \lambda \sqrt{\ell}) e^{\frac{1}{2}\lambda^2 \ell} + f(\sqrt{1-\ell}\lambda + \frac{y}{\sqrt{1-\ell}}) \sqrt{\frac{\ell}{1-\ell}} f(\frac{y}{\sqrt{\ell}}) \\
    & - e^{-2\lambda y} \left( 
    \Phi(\sqrt{1-\ell} \lambda - \frac{y}{\sqrt{1-\ell}}) \Phi(\frac{y}{\sqrt{\ell}} - \lambda \sqrt{\ell}) e^{\frac{1}{2}\lambda^2 \ell}
    - f(\sqrt{1-\ell}\lambda - \frac{y}{\sqrt{1-\ell}}) \sqrt{\frac{\ell}{1-\ell}} f(\frac{y}{\sqrt{\ell}})
    \right).
\end{align}
The following figure shows the accuracy of the approximation when $\ell = 0.05, \lambda = 0.3$.
\begin{figure}[ht!]
    \centering
    \includegraphics[width=16 cm]{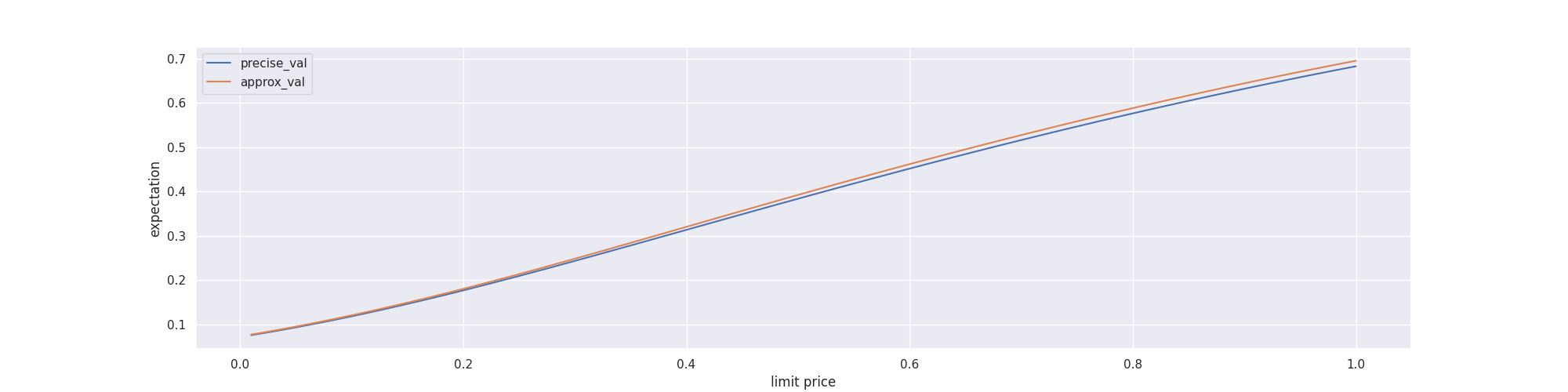}
    \caption{Accuracy of the approximation \eqref{eq approx exp II}}
\end{figure}
\end{lem}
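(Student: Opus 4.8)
The plan is to reduce the double integral appearing in \eqref{eq exp utility II} to a one-dimensional integral against a Gaussian density of variance $\ell$ and then to exploit that, for small $\ell$, this density concentrates near $z = 0$ on a scale $\sqrt{\ell} \ll 1$. Dropping the harmless prefactor $e^{-\lambda V_{k+1}(1)}$, the object to be approximated is
\[
    \int_{-\infty}^{y}\left( \int_{-\infty}^{y-z} e^{-\lambda c}\, p_{1-\ell}(c, h \leq y - z)\, dc \right) e^{-\lambda z}\, \frac{1}{\sqrt{\ell}} f\!\left(\frac{z}{\sqrt{\ell}}\right) dz .
\]
First I would evaluate the inner integral in closed form. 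Using the scaling identity $p_{1-\ell}(c, h \leq y-z) = \frac{1}{\sqrt{1-\ell}}\, p\!\left(\frac{c}{\sqrt{1-\ell}}, h \leq \frac{y-z}{\sqrt{1-\ell}}\right)$ together with the substitution $c = \sqrt{1-\ell}\, u$, the inner integral becomes $\int_{-\infty}^{(y-z)/\sqrt{1-\ell}} e^{-\lambda\sqrt{1-\ell}\, u}\, p\!\left(u, h \leq \frac{y-z}{\sqrt{1-\ell}}\right) du$, which is precisely the quantity computed in Lemma~\ref{lem exp inner int} with $\lambda$ replaced by $\lambda\sqrt{1-\ell}$ and $y$ replaced by $\frac{y-z}{\sqrt{1-\ell}}$. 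This gives
\[
    e^{\frac{1}{2} \lambda^2(1-\ell)}\left[ \Phi\!\left(\lambda\sqrt{1-\ell} + \frac{y-z}{\sqrt{1-\ell}}\right) - e^{-2\lambda(y-z)}\,\Phi\!\left(\lambda\sqrt{1-\ell} - \frac{y-z}{\sqrt{1-\ell}}\right)\right] .
\]

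Substituting this back splits the integral into a difference of two terms. In the first, the $z$-dependent weight is $e^{-\lambda z}\,\Phi\!\left(\lambda\sqrt{1-\ell} + \frac{y-z}{\sqrt{1-\ell}}\right)$; in the second, using $e^{-\lambda z} e^{-2\lambda(y-z)} = e^{-2\lambda y}\, e^{\lambda z}$, it is $e^{-2\lambda y}\, e^{\lambda z}\,\Phi\!\left(\lambda\sqrt{1-\ell} - \frac{y-z}{\sqrt{1-\ell}}\right)$. For each term I would Taylor-expand the $\Phi$-factor to first order in $z$ about $z = 0$ --- legitimate because the remaining weight $\frac{1}{\sqrt{\ell}} f(z/\sqrt{\ell})$ confines $z$ to an interval of width $O(\sqrt{\ell})$ --- so that the zeroth-order piece produces the factors $\Phi\!\left(\sqrt{1-\ell}\,\lambda \pm \frac{y}{\sqrt{1-\ell}}\right)$ and the first-order piece (via the chain rule, hence the $\frac{1}{\sqrt{1-\ell}}$) the factors $f\!\left(\sqrt{1-\ell}\,\lambda \pm \frac{y}{\sqrt{1-\ell}}\right)$ that appear in \eqref{eq approx exp II}, with the overall $e^{-2\lambda y}$ emerging as the prefactor of the second block. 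What remains are elementary one-dimensional integrals $\int_{-\infty}^{y} z^{k}\, e^{\pm\lambda z}\, \frac{1}{\sqrt{\ell}} f\!\left(\frac{z}{\sqrt{\ell}}\right) dz$ for $k = 0, 1$; completing the square evaluates the $k = 0$ case to $e^{\lambda^{2}\ell/2}\,\Phi\!\left(\frac{y}{\sqrt{\ell}} \pm \lambda\sqrt{\ell}\right)$ and the $k = 1$ case to a multiple of $\sqrt{\ell}\, f\!\left(\frac{y}{\sqrt{\ell}} \pm \lambda\sqrt{\ell}\right)$ plus an $O(\ell)$ term proportional to $\Phi$. Collecting everything, discarding the $O(\ell)$ contributions, and replacing $f\!\left(\frac{y}{\sqrt{\ell}} \pm \lambda\sqrt{\ell}\right)$ by $f\!\left(\frac{y}{\sqrt{\ell}}\right)$ (a lower-order change in the regime of interest) yields \eqref{eq approx exp II}.

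The hard part will not be any individual calculation but the error bookkeeping, exactly as with the approximation \eqref{eq prob limit fill approx}: one must check that the first-order Taylor remainder of the $\Phi$-factor, the dropped $O(\ell)$ piece of the $k = 1$ integral, and the replacement of $f\!\left(\frac{y}{\sqrt{\ell}} \pm \lambda\sqrt{\ell}\right)$ by $f\!\left(\frac{y}{\sqrt{\ell}}\right)$ are all genuinely subdominant for the relevant ranges of $\ell$, $y$, and $\lambda$. As before, this is a small-$\ell$ heuristic rather than a uniform asymptotic estimate, and its accuracy is corroborated numerically in the figure following the statement ($\ell = 0.05$, $\lambda = 0.3$). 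Carrying out the computation above while tracking which terms survive completes the argument.
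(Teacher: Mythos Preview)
Your proposal is correct and follows essentially the same route as the paper: evaluate the inner integral via Lemma~\ref{lem exp inner int} (after rescaling), split off the $e^{-2\lambda y}$ block, and expand the $\Phi$-factors to first order in $z$ against the narrow Gaussian weight $\frac{1}{\sqrt{\ell}}f(z/\sqrt{\ell})$. The only cosmetic difference is that in the first-order correction the paper additionally Taylor-expands $e^{\pm\lambda z}\approx 1\pm\lambda z$ before integrating, whereas you keep $e^{\pm\lambda z}$, complete the square, and then drop the resulting $O(\ell)$ piece and replace $f(y/\sqrt{\ell}\pm\lambda\sqrt{\ell})$ by $f(y/\sqrt{\ell})$; both routes land on the same approximation.
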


\begin{proof}
    The inner integral is evaluated using lemma \ref{lem exp inner int}:
    \begin{align*}
        & \int_{-\infty}^{y-z} e^{-\lambda (z+c)} p_{1-\ell}(c, h \leq y - z) dc \\
        = & e^{\frac{1}{2} (1-\ell) \lambda^2} e^{-\lambda z} \Phi(\sqrt{1-\ell} \lambda + \frac{y-z}{\sqrt{1-l}}) - e^{-2\lambda(y-z)} \left( \Phi(\sqrt{1-\ell} \lambda - \frac{y-z}{\sqrt{1-\ell}}) \right) \\
        = & e^{\frac{1}{2} (1-\ell) \lambda^2} \left( e^{-\lambda z} \Phi(\sqrt{1-\ell} \lambda + \frac{y-z}{\sqrt{1-\ell}}) - e^{-2\lambda y} e^{\lambda z} \Phi(\sqrt{1-\ell} \lambda - \frac{y-z}{\sqrt{1-\ell}}) \right) \\
        & = e^{\frac{1}{2} (1-\ell) \lambda^2} (I - e^{2\lambda y}\cdot II)
    \end{align*}
    Now we evaluate the outer integral:
    \begin{equation}\label{eq exp outer int}
        \int_{-\infty}^{y} (I - e^{2\lambda y} \cdot II) \frac{1}{\sqrt{\ell}} f(\frac{z}{\sqrt{\ell}}) dz
    \end{equation}
    We ignore the $e^{-\frac{1}{2}(1-\ell)\lambda^2}$ term for now.
    \begin{remark}
        A comparison between the precise and approximate value is carried out at
        the level of \eqref{eq exp outer int}. Namely, we compute \eqref{eq exp
        outer int} numerically, and approximate it below.
    \end{remark}
    We estimate the first term:
    \begin{align*}
        & \int_{-\infty}^{y} I \cdot \frac{1}{\sqrt{\ell}}
        f(\frac{z}{\sqrt{\ell}}) dz \\
        = & \int_{-\infty}^{y} e^{-\lambda z} \Phi(\sqrt{1-\ell}\lambda +
        \frac{y}{\sqrt{1-\ell}}) \frac{1}{\sqrt{\ell}} f(\frac{z}{\sqrt{\ell}})
        dz \\
        + & \int_{-\infty}^{y} e^{-\lambda z} \left( \Phi(\sqrt{1-\ell}\lambda +
        \frac{y-z}{\sqrt{1-\ell}}) - \Phi(\sqrt{1-\ell}\lambda +
        \frac{y}{\sqrt{1-\ell}}) \right) \frac{1}{\sqrt{\ell}}
        f(\frac{z}{\sqrt{\ell}}) dz.
    \end{align*}
    The first term in $\int I$ can be computed precisely:
    \[
        \int I.1 = \Phi(\sqrt{1-\ell} \lambda + \frac{y}{\sqrt{1-\ell}})
        \Phi(\frac{y}{\sqrt{\ell}} + \lambda \sqrt{\ell})
        e^{\frac{1}{2}\lambda^2 \ell}.
    \]
    The second can be estimated by
    \begin{align*}
        \int I.2 & \approx \int_{-\infty}^{y} (1 - \lambda z)
        f(\sqrt{1-\ell}\lambda + \frac{y}{\sqrt{1-\ell}})
        (-\frac{z}{\sqrt{1-\ell}}) \frac{1}{\sqrt{\ell}}
        f(\frac{z}{\sqrt{\ell}}) dz \\
        & \approx f(\sqrt{1-\ell}\lambda + \frac{y}{\sqrt{1-\ell}})
        \sqrt{\frac{\ell}{1-\ell}} f(\frac{y}{\sqrt{\ell}}).
    \end{align*}
    The other term $II$ can be estimated in a similar fashion.
\end{proof}
\bibliographystyle{amsplain}
\bibliography{references}

\end{document}